\documentclass[twocolumn,showpacs,preprintnumbers,superscriptaddress,amsmath,amssymb,nofootinbib]{revtex4}
\input psfig.sty
\usepackage{graphicx}
\usepackage[mathscr]{eucal}
\usepackage{amssymb}
\usepackage{setspace}
\usepackage{color}

\begin{document}

\title{The unphysical character of dark energy fluids}

\author{Ronaldo C. Duarte} \email{ronaldocesar@uern.br}
\affiliation{Departamento de Matem\'atica e Estat\'{i}stica, Universidade do Estado do Rio Grande do Norte, 59610-210, Mossor\'o, RN, Brazil}
\author{Ed\'esio M. Barboza Jr.} \email{edesiobarboza@uern.br}
\affiliation{Departamento de F\'isica, Universidade do Estado do Rio Grande do Norte, 59610-210, Mossor\'o, RN, Brazil}
\author{Everton M. C. Abreu}\email{evertonabreu@ufrrj.br}
\affiliation{Grupo de F\' isica Te\'orica e F\' isica Matem\'atica, Departamento de F\'{i}sica, Universidade Federal Rural do Rio de Janeiro, 23890-971, Serop\'edica, RJ, Brazil}
\affiliation{Departamento de F\'{i}sica, Universidade Federal de Juiz de Fora, 36036-330, Juiz de Fora, MG, Brazil}
\affiliation{Programa de P\'os-Gradua\c{c}\~ao Interdisciplinar em F\'isica Aplicada, Instituto de F\'{i}sica, Universidade Federal do Rio de Janeiro, 21941-972, Rio de Janeiro, RJ, Brazil}
\author{Jorge Ananias Neto}\email{jorge@fisica.ufjf.br}
\affiliation{Departamento de F\'{i}sica, Universidade Federal de Juiz de Fora, 36036-330, Juiz de Fora, MG, Brazil}

\newtheorem{definition}{\bf Definition}
\newtheorem{affirmation}{\bf Affirmation}
\newtheorem{theorem}{Theorem}
\newtheorem{remark}{Remark}
\newtheorem{lemma}{Lemma}
\newtheorem{proposition}{Proposition}
\newtheorem{corollary}{Corollary}
\newtheorem{proof}{Proof}
\newtheorem{proofa}{Proff of Affirmation}
\def\ni{\noindent}
\def\no{\nonumber \\}

\date{\today}

\begin{abstract}

\ni {\color{black} It is well known that, in the context of general relativity, an unknown kind of matter that must violate the strong energy condition is required to explain the current accelerated phase of expansion of the Universe. This unknown component is called dark energy and is characterized by an equation of state parameter $w=p/\rho<-1/3$. Thermodynamic stability requires that $3w-d\ln |w|/d\ln a\ge0$ and positiveness of entropy that $w\ge-1$. In this paper we proof that we cannot obtain a differentiable function $w(a)$ to represent the dark energy that satisfies these conditions trough the entire history of the Universe. 

}

\end{abstract}

\pacs{98.80.Es, 98.80.-k, 98.80.Jk}

\maketitle

\section{Introduction}

Between the end of the twenty century and the beginning of the twenty one's, a large amount of observational data revealed that the Universe is currently expanding at an accelerated rate \cite{acc_exp1,acc_exp2,acc_exp3,acc_exp4}. This accelerated expansion can be easily explained if a cosmological constant is added to the Einstein field equations. It can be shown that the zero point energy of all fields filling the Universe acts into the Einstein equations as a cosmological constant \cite{zeldovich}. However, the observational constraints on the cosmological constant differs from the theoretical value provided by quantum field theory by at least $60$ orders of magnitude \cite{weinberg, padmanabhan, bousso}. Although many proposals to solve this problem appeared in the literature \cite{CCP1,CCP2,CCP3,CCP4,CCP5,CCP6,CCP7,CCP8} none of them provide a real conclusive solution to the problem.  The lack of a convincing explanation for the cosmological constant problem has led physicists to adopt a pragmatic approach which assume that the cosmological constant is canceled out by some unknown symmetries in nature and that the accelerated expansion is due to some unknown kind of matter. In order to obtain an accelerated expansion of the Universe at present time, the matter content of the Universe must violates the strong energy condition, i. e.,

\begin{equation}
\label{SEC_violation}
\sum_{i}\Big(\rho_{i}+3\frac{p_i}{c^2}\Big)<0\,\,,
\end{equation}

\noindent where $\rho_i$ and $p_i$ are, respectively, the energy density and the pressure of the $i-$th component of the matter content of the Universe and $c$ the speed of the light. The violation of the strong energy condition implies that the total pressure must be negative. Since baryonic and cold dark matter are pressureless ($p_m=0$), and the pressure of relativistic matter is $\rho_{\gamma}c^2/3$, the Universe must contain an additional source term with a pressure sufficiently negative to ensure the validity of (\ref{SEC_violation}). This unknown source was dubbed dark energy (DE) (see Ref. \cite{copeland} for a review). DE is frequently characterized by the equation of state (EoS) parameter $w=p/\rho c^2$ which mimics a cosmological constant if $w=-1$, a quintessence scalar field if $-1\le w\le 1$ \cite{quintessence}, a phantom field  if $w<-1$ \cite{phantom} and many other forms of exotic matter.

Since most DE models are able to adjust the data seamlessly, it is extremely difficult to decide which of these models is correct, if there is one. The data shows only that the Universe is expanding at an accelerated rate but does not reveal which object causes this acceleration, i.e., if it is DE or something else. However, although unknown, DE is not immune to the laws of physics. Such an exotic fluid must satisfy the bounds imposed by the laws of thermodynamics which have a strong experimental basis. Here we investigate the limits imposed by thermodynamics to a DE fluid. We proof that thermodynamics rule out DE fluids.

This paper is organized as follows: Section II summarizes the main results of the thermodynamical of cosmic fluids contained in Ref. \cite{Barboza_thermo}; Section III contains the proof that there is no EoS parameter $w(a)$ of DE satisfying the thermodynamical through the entire history of the Universe: there is at least one point of discontinuity at $0<a(t)<1$ where $w(a)$ is not a differentiable function and where the stability condition fails. Section IV contains our conclusion.

\section{Thermodynamics of the cosmic fluids}

{\color{black} In this section we will analyze the thermodynamical properties of the function that describes DE cosmic fluids.   The objective is to construct, based on these ``heat" properties, a general function to analyze afterwards, its mathematical viability.}

\subsection{Internal energy and entropy}

Let us consider an expanding, homogeneous and isotropic Universe filled by $n$ no interacting perfect fluids. Since all physical distances scale with the same factor $a(t)$, the physical volume of the Universe at a given time is $V=a^3(t)V_0$\footnote{Here the index $0$ will denote the present time value of an observable and we will adopt the convention $a_0=1$}. In such a model the internal energy of the $i$th fluid  component can be written as
\begin{equation}
\label{FRW_energy}
U_i=\rho_i c^2 V\,\,.
\end{equation}

\noindent Assuming a reversible adiabatic expansion, 
the first law of thermodynamics
\begin{eqnarray}
\label{Gibbs_law}
T_idS_i&=&dU_i+p_idV,
\end{eqnarray}

\noindent leads to so-called fluid equation,
\begin{equation}
\label{fluid_eq}
d\ln\rho_i+3(1+w_i)d\ln a=0\,\,,
\end{equation}

\noindent which expresses the energy-momentum conservation. Assuming that the density is a function of the temperature and volume, i. e., $\rho_i=\rho_i(T_i, V)$, the fact that $dS_i$ is an exact differential implies that \cite{weinberg1971}
\begin{equation}
\label{temperature_law_w}
d\ln T_i=-3w_id\ln a\,\,,
\end{equation}

\noindent or, using Eq. (\ref{fluid_eq}) to eliminate $w_i$,
\begin{equation}
\label{temperature_law_rho}
d\ln T_i=d\ln \rho_i+ 3d\ln a\,\,.
\end{equation}

\noindent Integration of Eq. (\ref{temperature_law_rho}) provides
\begin{equation}
\label{T_rho_rel}
\frac{T_i}{T_{i,0}}=\frac{\rho_i}{\rho_{i,0}}a^3\,\,.
\end{equation}



\noindent Thus, the internal energy of the $i$th fluid component (\ref{FRW_energy}) can be written as
\begin{equation}
\label{internal_energy}
U_i=U_{i,0}\frac{T_i}{T_{i,0}}\,\,.
\end{equation}

The entropy of the $i$th fluid component is obtained from the Euler relation \cite{entropy_positiveness}:
\begin{equation}
\label{Euler}
U_i=T_iS_i-p_iV+\mu_iN_i\,\,,
\end{equation}

\noindent where $\mu_i$ and $N_i$ are, respectively, the chemical potential and the number of particles of the $i$th component. By assuming that the chemical potential is zero we obtain, combining Eqs. (\ref{FRW_energy}), (\ref{T_rho_rel}) and (\ref{Euler}), that
\begin{equation}
\label{entropy}
S_i=(1+w_i)\frac{\rho_{i,0}c^2V_0}{T_{i,0}} \,\,,
\end{equation}

\noindent {\color{black} which shows the direct relation between the entropy and the EoS parameter, $w_i$, for a given component of the cosmic fluid.}





%

\subsection{Heat capacity}

The classical thermodynamical definition of a fluid's heat capacity $C_i$ is \cite{callen},
\begin{equation}
\label{SH_definition}
dQ_i=C_idT_i\,\,,
\end{equation} 

\noindent where $dT_i$ is the fluid temperature increase due to an absorbed heat $dQ_i=T_idS_i$. The heat capacity of a fluid will differ depending on whether the fluid is heated at constant volume or at constant pressure. From the first law of thermodynamics, Eq. (\ref{Gibbs_law}), at constant volume, Eq. (\ref{SH_definition}) becomes
\begin{equation}
\label{CV0}
dU_i=C_{iV}dT_i\,\,,
\end{equation}

\noindent where
\begin{equation}
\label{CV}
C_{iV}=\Big(\frac{\partial U_i}{\partial T_i}\Big)_V\,\,,
\end{equation}

\noindent is the fluid's heat capacity at constant volume. From Eq. (\ref{internal_energy}), it is easy to show that
\begin{equation}
\label{CV_cosmic}
C_{iV}=\frac{U_{i,0}}{T_{i,0}}={\rm constant}\,\,,
\end{equation}

\noindent for any component of the Universe.

Now, from the enthalpy definition,
\begin{equation}
\label{enthalpy}
h_i=U_i+p_iV\,\,,
\end{equation}

\noindent the first law of thermodynamics can be written as
\begin{equation}
\label{1st_law_enthalpy}
dQ_i=dh_i-Vdp_i\,\,.
\end{equation}

\noindent Therefore, at constant pressure, Eq. (\ref{SH_definition}) becomes
\begin{equation}
\label{CP0}
dh_i=C_{p_i}dT_i\,\,,
\end{equation}

\noindent where
\begin{equation}
\label{CP}
C_{p_i}=\Big(\frac{\partial h_i}{\partial T_i}\Big)_{p_i}\,\,,
\end{equation}

\noindent is the fluid's heat capacity at constant pressure. Since $p_iV=w_iU_i$, the enthalpy Eq. (\ref{enthalpy}) becomes 
\begin{equation}
h_i=(1+w_i)U_i
\end{equation}

\noindent and, from Eqs. (\ref{internal_energy}) and (\ref{temperature_law_w}), we have that
\begin{equation}
\label{Cp_time_dependent}
C_{p_i}=\Big(1+w_i-\frac{1}{3}\frac{d\ln \vert w_i\vert}{d\ln a}\Big)C_{iV}\,\,.
\end{equation}

\noindent {\color{black} which shows that we can write a compact relation between the heat capacities at constant pressure and constant volume, something like $C_{p_i}=\Omega(w_i) C_{iV}$, where $\Omega(w_i)$ is a function of the EoS parameter, defined in the last equation.}

%

\subsection{Compressibility and expansibility}

\noindent By considering the volume as function of temperature and pressure, we have that\footnote{Remember that we are assuming that the fluids evolve separately, that is, they do not exchange heat, as shown in Eq. (\ref{fluid_eq}).}
\begin{eqnarray}
\label{vol_diff}
dV&=&\sum_i\Big[\Big(\frac{\partial V}{\partial T_i}\Big)_{p_i}dT_i+\Big(\frac{\partial V}{\partial p_i}\Big)_{T_i}dp_i\Big]\nonumber\\
    &=&V\sum_i(\alpha_idT_i-\kappa_{T_i}dp_i) \,\,,
\end{eqnarray}

\noindent where 
\begin{equation}
\label{alpha}
\alpha_i\equiv\frac{1}{V}\Big(\frac{\partial V}{\partial T_i}\Big)_{p_i}
\end{equation}

\noindent is the thermal expansibility and  
\begin{equation}
\label{kappa_t}
\kappa_{T_i}\equiv-\frac{1}{V}\Big(\frac{\partial V}{\partial p_i}\Big)_{T_i}
\end{equation}

\noindent is the isothermal compressibility. The thermal expansibility measures the thermal volume expansion at constant pressure and isothermal compressibility measures the relative modification of the volume together with the increasing pressure at fixed temperature. 

Analogously to the isothermal compressibility, we can define the adiabatic compressibility $\kappa_{S_i}$ if, instead of temperature, the entropy is kept fixed. 
%
%
\noindent It can be shown that the isothermal compressibility and the isothermal expansibility are related by
\begin{equation}
\label{ak_ratio}
\frac{\alpha_i}{\kappa_{T_i}}=\Big(\frac{\partial p_i}{\partial T_i}\Big)_V\,\,,
\end{equation}

\noindent and that the ratio between the adiabatic and the isothermal compressibilities are equal to the ratio between the  heat capacities at constant volume and at constant pressure, i.e.,
\begin{equation}
\label{kk_ratio}
\frac{\kappa_{S_i}}{\kappa_{T_i}}=\frac{C_{iV}}{C_{p_i}}\,\,.
\end{equation}

\noindent Notice that $p_iV=w_iC_{iV}T_i$ and using Eq. (\ref{temperature_law_w}) we obtain
\begin{equation}
\label{a_time_dependent}
\alpha_i=\frac{C_{iV}}{p_iV}\Big(w_i-\frac{1}{3}\frac{d\ln \vert w_i\vert}{d\ln a}\Big)\,\,.
\end{equation}

\noindent From (\ref{ak_ratio}) is easy to show that
\begin{equation}
\label{27}
\kappa_{T_i}=\frac{\alpha_i V}{w_iC_{iV}}\,\,,
\end{equation}

\noindent and from the above equation and (\ref{kk_ratio}) we have 
\begin{equation}
\label{28}
\kappa_{S_i}=\frac{\alpha_i V}{w_iC_{p_i}}\,\,
\end{equation}

\noindent {\color{black} and, differently from Eq. \eqref{entropy}, that shows a direct relation between the entropy and the EoS parameter $w_i$, Eq. \eqref{28} shows a much more intricate relation between the adiabatic compressibility and $w_i$   Substituting Eq. \eqref{a_time_dependent} into Eqs. \eqref{27} and \eqref{28} we have 
\begin{equation}
\label{29}
\kappa_{T_i} \,=\, \frac{1}{w_i^2\rho_i}\, \Big(w_i\,-\,\frac{1}{3}\frac{d\,ln|w_i|}{d\,ln a} \Big)
\end{equation}

\noindent and

\begin{equation}
\label{30}
\kappa_{S_i} \,=\, \frac{C_{iV}}{w_i^2\rho_i C_{p_i}}\, \Big(w_i\,-\,\frac{1}{3}\frac{d\,ln|w_i|}{d\,ln a} \Big)
\end{equation}

\noindent which can be used to determine the constraints on cosmic fluids EoS parameter $w_i$.   We will define some of these constraints just below.}


\subsection{Constraints on cosmic fluids}

Thermodynamical stability requires that $C_{iV},C_{p_i},\kappa_{S_i},\kappa_{T_i}\ge0$ simultaneously. Conversely, these quantities are all negative simultaneously if the stability cannot be obtained. From Eq. (\ref{CV_cosmic}) it is clear that $C_{iV}\ge0$ so that the cosmic fluids satisfies the stability conditions. Moreover, $C_{iV},C_{p_i},\kappa_{S_i},\,{\rm and}\,\kappa_{T_i}$ are related by
$$
C_{p_i}=C_{iV}+\frac{TV\alpha_i^2}{\kappa_{T_i}}
$$
\noindent and
$$
\kappa_{T_i}=\kappa_{S_i}+\frac{TV\alpha_i^2}{C_{p_i}}
$$
\noindent so that $C_{p_i}\ge C_{iV}$ and $\kappa_{T_i}\ge\kappa_{S_i}$. From Eqs. (\ref{29}) and (\ref{30}) it is easy to see that the above conditions are satisfied only if the fluid EoS parameter obeys the constraint
\begin{equation}
\label{thermo_bound}
w_i-\frac{1}{3}\frac{d\ln \vert w_i\vert}{d\ln a}\ge0.
\end{equation}

\noindent Along with the above constraint, the positiveness of entropy implies that 
\begin{equation}
\label{S>0}
w_i\ge-1 \,\,.
\end{equation}

\noindent It is obvious that if $w_i$ is constant, thermodynamical stability implies that $w_i\ge0$ which rule out all negative pressure fluids with a constant EoS parameter. 

In order to accelerate the Universe at present time is also required by (\ref{SEC_violation}) that $w(a\to1)<-1/3$. 

\section{Thermodynamical inviability of dark energy fluids}

In this section we will investigate the existence of functions $w(a)$ that satisfies the following conditions:

\begin{enumerate}
\item[i)] $\displaystyle{w-\frac{1}{3}\frac{d\ln \vert w\vert}{d\ln a}\ge0}$;
\item[ii)] $w\ge-1$ for all $a>0$;
\item[iii)] $ \lim\limits_{a \rightarrow 1}w(a)=w_0<-\frac{1}{3}$. 
\end{enumerate}

\noindent Conditions i) and ii) are thermodynamical constraints on all cosmic fluids physically acceptable, and condition iii) is required to accelerate the Universe at present time. 
Below we will investigate the existence of differentiable functions, except an a finite set, i. e., 
functions $w$ such that the set $\left\{t; w \mbox{ is not differentiable at t}\right\}$ is finite, whose derivative is continuous for $t$ sufficiently small and that satisfies the condition i), ii) and iii) above. We will show that there is no function $w(a)$ of $C_1$ class satisfying the conditions i), ii) and iii).
In our proof, we will use the follow theorems (see \cite{elonlages} for the proofs):

\begin{theorem}
	Let $f:[a,b] \rightarrow \mathbb{R}$ be continuous. If $f(a)<d<f(b)$ then there is $c \in (a,b)$ such that $f(c)=d$.
\end{theorem}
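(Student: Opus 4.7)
The statement is the classical Intermediate Value Theorem, so the plan is to derive it from the least-upper-bound axiom of $\mathbb{R}$ --- the only tool powerful enough to manufacture the required root from pure continuity alone.

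My strategy is to isolate the ``last'' point at which $f$ is still strictly below the target value $d$. Concretely, I would introduce the set $S = \{x \in [a,b] : f(x) < d\}$, observe that $S$ is nonempty (since $f(a) < d$ forces $a \in S$) and bounded above by $b$, and define $c := \sup S \in [a,b]$. A short preparatory step uses the continuity of $f$ at the two endpoints, together with the strict inequalities $f(a) < d < f(b)$, to produce a right-neighborhood of $a$ contained in $S$ and a left-neighborhood of $b$ disjoint from $S$; this forces $c \in (a,b)$.

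The heart of the argument is to rule out $f(c) \ne d$ by two symmetric continuity-plus-contradiction moves. If $f(c) < d$, continuity at $c$ yields some $\delta > 0$ with $f(x) < d$ on $(c-\delta, c+\delta) \subset [a,b]$, so points of $S$ exist strictly to the right of $c$, contradicting $c = \sup S$. If instead $f(c) > d$, the same continuity yields $\delta > 0$ with $f(x) > d$ on that neighborhood, so $c - \delta$ becomes a smaller upper bound for $S$, again a contradiction. The only remaining possibility is $f(c) = d$, and $c \in (a,b)$ was already established.

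The main obstacle is conceptual rather than computational: the result is equivalent to completeness of $\mathbb{R}$ (it fails over $\mathbb{Q}$, as $f(x) = x^2 - 2$ on $[0,2] \cap \mathbb{Q}$ illustrates), so the proof must invoke completeness somewhere, and one must take care that the points chosen in the two continuity arguments actually stay inside $[a,b]$ --- which is exactly why pinning down $c \in (a,b)$ beforehand is worthwhile. The supremum formulation above is the cleanest such invocation; an equivalent bisection argument would build a nested sequence of intervals of lengths $(b-a)/2^n$ on whose endpoints $f$ straddles $d$, then extract $c$ as their common intersection point --- the same content with slightly more bookkeeping.
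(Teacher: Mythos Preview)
Your argument is the standard and correct supremum proof of the Intermediate Value Theorem; the two-sided continuity contradiction is handled cleanly, and your care in first locating $c$ strictly inside $(a,b)$ avoids the usual edge-case sloppiness.

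Note, however, that the paper does not actually supply its own proof of this statement: it lists Theorem~1 (together with the Mean Value Theorem, monotonicity of the integral, and a comparison lemma for limits) as a standard result and simply refers the reader to the textbook \cite{elonlages} for the proofs. So there is no ``paper's approach'' to compare against beyond that citation; your supremum argument is precisely the kind of proof one finds in that reference, and would serve perfectly well if the paper wished to be self-contained.
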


\begin{theorem}
	Let $f:[a,b] \rightarrow \mathbb{R}$ be continues. If $f$ is differentiable at $(a,b)$, then there is $c \in (a,b)$ such that $$f'(c)=\frac{f(b)-f(a)}{b-a}.$$
\end{theorem}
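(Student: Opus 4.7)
The plan is to reduce this Mean Value Theorem to Rolle's theorem by subtracting the secant line, and then to establish Rolle's theorem from the Extreme Value Theorem together with the vanishing of the derivative at an interior extremum. Notably, Theorem 1 (IVT) stated just above is not directly useful here, because MVT is a statement about the derivative, and derivatives need not be continuous.

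First I would introduce the auxiliary function
$$g(x) \;=\; f(x) \;-\; f(a) \;-\; \frac{f(b)-f(a)}{b-a}\,(x-a),$$
which inherits continuity on $[a,b]$ and differentiability on $(a,b)$ from $f$, and which satisfies $g(a)=g(b)=0$ by direct substitution. Its derivative is $g'(x) = f'(x) - \frac{f(b)-f(a)}{b-a}$, so producing any interior $c$ with $g'(c)=0$ is equivalent to producing the required $c$ with $f'(c) = \frac{f(b)-f(a)}{b-a}$.

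To produce that zero of $g'$, I would invoke the Extreme Value Theorem on the compact interval $[a,b]$ to obtain a global maximum $M$ and minimum $m$ of the continuous function $g$. If $M=m$, then $g$ is constant (necessarily $\equiv 0$), and any $c \in (a,b)$ works. Otherwise at least one of $M$ or $m$ is nonzero, and since $g(a)=g(b)=0$ this extremum must be attained at some interior point $c \in (a,b)$. At such a local extremum the one-sided difference quotients of $g$ at $c$ have opposite signs while both must converge to $g'(c)$, forcing $g'(c)=0$.

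The main obstacle is foundational rather than computational: the argument rests on two analytic prerequisites that are \emph{not} proved in the excerpt, namely the Extreme Value Theorem for continuous functions on a compact interval and Fermat's principle that a differentiable interior local extremum has vanishing derivative. A complete write-up would cite these from a standard real-analysis reference such as \cite{elonlages}. Granted those, the chain $f \to g \to$ Rolle $\to$ MVT is short and routine.
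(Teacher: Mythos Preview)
Your argument is correct and is the standard textbook proof: reduce to Rolle's theorem via the secant-subtracted auxiliary $g$, then obtain Rolle from the Extreme Value Theorem together with Fermat's interior-extremum criterion. The paper does not actually prove this statement; Theorems~1--4 are listed as background facts with the remark ``see \cite{elonlages} for the proofs,'' so there is no in-paper argument to compare against. Your sketch is precisely the route one finds in the cited reference, and your identification of the Extreme Value Theorem and Fermat's lemma as the genuine analytic prerequisites is accurate.
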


\begin{theorem}
	Let $f,g:[a,b] \rightarrow \mathbb{R}$ be integrable functions. If $f(x)\leq g(x)$ for all $x \in [a,b]$, then
	$$
	\int_{a}^{b}f(x)dx \leq \int_{a}^{b}g(x)dx
	$$
\end{theorem}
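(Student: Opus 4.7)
The plan is to reduce the inequality to the statement that the Riemann integral of a non-negative integrable function is non-negative. Set $h(x) = g(x) - f(x)$; the hypothesis $f(x) \le g(x)$ on $[a,b]$ gives $h(x) \ge 0$ pointwise, and by the linearity of the Riemann integral, $h$ is integrable with $\int_a^b h(x)\,dx = \int_a^b g(x)\,dx - \int_a^b f(x)\,dx$. The desired conclusion $\int_a^b f(x)\,dx \le \int_a^b g(x)\,dx$ is therefore equivalent to $\int_a^b h(x)\,dx \ge 0$, so it suffices to prove the special case that $h \ge 0$ implies $\int_a^b h(x)\,dx \ge 0$.

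For this reduction step I would work with Darboux lower sums. Given any partition $P = \{a = x_0 < x_1 < \cdots < x_n = b\}$ of $[a,b]$, define $m_i = \inf\{h(x) : x \in [x_{i-1}, x_i]\}$. Because $h \ge 0$ on every subinterval, each $m_i$ is non-negative, and consequently the lower sum $L(h,P) = \sum_{i=1}^n m_i\,(x_i - x_{i-1})$ is a sum of non-negative terms, hence non-negative. Since $h$ is assumed integrable, the Riemann integral equals the supremum of all lower sums, $\int_a^b h(x)\,dx = \sup_P L(h,P)$, which is a supremum of non-negative numbers and is therefore itself non-negative. Combining this with the linearity identity above yields $\int_a^b g(x)\,dx - \int_a^b f(x)\,dx \ge 0$, which is the claim.

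There is no substantive obstacle here; the argument is a direct consequence of the definition of integrability together with linearity, and this is why the statement appears in the excerpt as a cited fact rather than one that is proved in detail. If one prefers to work with tagged Riemann sums instead of Darboux sums, the same strategy goes through with only a cosmetic modification: every tagged Riemann sum for $h$ is non-negative because each sample value $h(\xi_i)$ is non-negative, and the common limit of such sums (which exists by integrability) must then also be non-negative. The only care one has to take is to invoke integrability of $g - f$ explicitly, so that the linearity rewriting in the first paragraph is legitimate; this follows from integrability of $f$ and $g$ individually by the standard linearity theorem for the Riemann integral.
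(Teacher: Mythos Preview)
Your proof is correct and entirely standard. The paper does not actually prove this theorem at all: it is stated as one of four preliminary results from real analysis with the parenthetical ``see \cite{elonlages} for the proofs,'' so there is no in-paper argument to compare against. Your reduction to the non-negativity of $\int_a^b (g-f)$ via Darboux lower sums is exactly the textbook route one would find in the cited reference, and your own remark that the statement ``appears in the excerpt as a cited fact rather than one that is proved in detail'' is accurate.
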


\begin{theorem}
	Let $f,g:X \rightarrow \mathbb{R}$. If $f(x) \leq g(x)$ for all $x \in X$ and $\lim\limits_{x \to a}g(x)=-\infty$ then $\lim\limits_{x \to a}f(x)=-\infty.$
\end{theorem}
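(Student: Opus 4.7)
The plan is to prove the statement by a direct unpacking of the definition of $\lim_{x\to a} g(x) = -\infty$ and then transferring the resulting inequality to $f$ through the pointwise bound. No integration, no intermediate value arguments, and none of the other preparatory theorems of the section will actually be needed; this is a pure limit-comparison result.

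First I would fix an arbitrary $M > 0$ and ask what it means for $g$ to tend to $-\infty$ at $a$. By hypothesis, there exists $\delta > 0$ such that every $x \in X$ with $0 < |x - a| < \delta$ satisfies $g(x) < -M$. (Implicitly this requires $a$ to be an accumulation point of $X$, which is already built into the assumption that the limit of $g$ at $a$ exists in the extended sense.) Next I would invoke the pointwise hypothesis $f(x) \leq g(x)$ for every $x \in X$: on the same punctured neighborhood, we then obtain the chain
\begin{equation*}
f(x) \;\leq\; g(x) \;<\; -M.
\end{equation*}

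Since $M > 0$ was arbitrary, this is precisely the $\varepsilon$-$\delta$ (here rather ``$M$-$\delta$'') characterization of $\lim_{x \to a} f(x) = -\infty$, completing the argument. The main ``obstacle'' is really only a bookkeeping point: one must be careful that the $\delta$ coming from the definition of the limit of $g$ can be reused verbatim for $f$, which is legitimate because the inequality $f\le g$ is assumed on all of $X$ and not merely on a subset. Once that observation is made, the proof reduces to two lines, so my proposal is simply to present the fixed-$M$ argument above cleanly and then remark that symmetric versions (for $+\infty$, or for one-sided limits) follow by identical reasoning.
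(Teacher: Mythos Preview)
Your proof is correct: the direct $M$--$\delta$ argument you give is exactly the standard one, and nothing more is needed. Note, however, that the paper does not supply its own proof of this theorem; it is stated as one of several preparatory results with the proofs deferred to the cited analysis textbook, so there is no in-paper argument to compare against.
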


%
%
 
 \begin{proposition}
 Let $w:(0, + \infty)\rightarrow \mathbb{R}$ be a function and suppose that there is $\delta>0$ such that $w$ is differentiable and its derivative is continuous at $(0,\delta)$ with
 
\begin{enumerate}
\item[i)]  $\displaystyle{w-\frac{1}{3}\frac{d \ln|w|}{d\ln a} \geq 0},$ where $\vert w\vert$ is differentiable.
\end{enumerate}
If there is $t_{0} \in (0, \delta)$ such that $w(t_{0})<0$ then $\lim\limits_{t \rightarrow 0}w(t)= - \infty.$
\end{proposition}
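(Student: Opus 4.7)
The plan is to convert the pointwise constraint into a differential inequality on the set where $w<0$, propagate the sign of $w$ from $t_0$ down to $0$ via monotonicity, and then integrate to force the divergence.

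First I would rewrite condition i) using $\frac{d\ln|w|}{d\ln a}=\frac{a\,w'(a)}{w(a)}$. Where $w(a)<0$, multiplying the inequality by $3w$ flips its direction and yields
\begin{equation}
w'(a)\,\geq\,\frac{3\,w(a)^{2}}{a}.
\end{equation}
In particular $w'>0$ on any sub-interval on which $w$ is negative, so $w$ is strictly increasing there.

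Next I would show that $w(t)<0$ for every $t\in(0,t_0]$. Let $(c,t_0]$ be the maximal sub-interval of $(0,\delta)$ ending at $t_0$ on which $w$ is negative; it is well-defined because $w$ is continuous and $w(t_0)<0$. By the preceding remark $w$ is strictly increasing on $(c,t_0]$, so $w(t)\leq w(t_0)<0$ for every $t$ in this interval. If $c>0$, continuity yields $w(c)=\lim_{t\to c^{+}}w(t)\leq w(t_0)<0$, contradicting the maximality of the interval (maximality together with continuity forces $w(c)\geq 0$). Hence $c=0$ and $w<0$ throughout $(0,t_0]$.

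The monotonicity just established gives $w(t)^{2}\geq w(t_0)^{2}$ on $(0,t_0]$, which strengthens the differential inequality to an a priori lower bound whose right-hand side no longer depends on the unknown $w$: $w'(a)\geq 3\,w(t_0)^{2}/a$. Integrating from $t$ to $t_0$ via Theorem 3 produces
\begin{equation}
w(t_0)-w(t)\,\geq\,3\,w(t_0)^{2}\,\ln\!\frac{t_0}{t},
\end{equation}
so that $w(t)\leq w(t_0)-3\,w(t_0)^{2}\ln(t_0/t)$. The right-hand side tends to $-\infty$ as $t\to 0^{+}$, and Theorem 4 then concludes $\lim_{t\to 0^{+}}w(t)=-\infty$.

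The crux of the argument is the sign-propagation step: without ruling out the possibility that $w$ vanishes somewhere inside $(0,t_0)$, condition i) could fail locally and the integration would break down. The differential inequality conveniently plays a dual role, simultaneously forcing the monotonicity that extends the negativity of $w$ all the way down to $a=0$ and supplying the coercive bound $w'\geq 3w(t_0)^{2}/a$ that drives the eventual divergence.
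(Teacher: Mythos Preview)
Your proof is correct and follows essentially the same strategy as the paper: convert condition i) into the differential inequality $w'(a)\geq 3w(a)^2/a$ on the region where $w<0$, propagate negativity of $w$ from $t_0$ back to $0$, and then integrate the resulting lower bound $w'\geq \text{const}/a$ to force $w(t)\to-\infty$. The paper carries out the sign-propagation step slightly differently (it fixes an auxiliary level $c$ with $w(t_0)<c<0$, takes $\bar t=\sup\{t:w(t)>c\}$, and invokes the Mean Value Theorem explicitly to reach a contradiction), whereas you argue via the maximal negative sub-interval and monotonicity; these are equivalent packagings of the same idea.
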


\begin{proof}
Let $\delta>0 $ with $w$ differentiable and with a continuous derivative at $(0, \delta)$. Suppose that there is $t_{0} \in (0, \delta)$	such that $w(t_{0})<0$. Let $c<0$  such that $w(t_{0})<c<0$. $w$ is continuous at $(0, \delta)$, since it is differentiable in this interval and, therefore, we can suppose that at $t \in (t_{0}-\epsilon, t_{0}+\epsilon)$  we have $w(t)<c$ for $\epsilon$ with $t_{0}+ \epsilon< \delta$. We affirm that $w(t)\leq c$ 
for all $t \in (0, t_{0}+\epsilon)$. In fact, otherwise we would take $\overline{t} = \sup \left\{t \in (0,t_{0}+ \epsilon); w(t)> c\right\}$ and we would have 
\begin{enumerate}
\item $\overline{t}\leq t_{0}- \epsilon$;
\item and, by supremum definition, that at $ t \in(\overline{t},t_{0}+ \epsilon)$, $w(t)\leq c<0$. Therefore 
$$
\frac{dw}{dt}>0
$$
in this interval. But, by the Theorem 2, there would exist $s \in (\overline{t}, t_{0})$ such that
$$
\frac{dw}{dt}(s)= \frac{w(t_{0})-w(\overline{t})}{t_{0}- \overline{t}}<0
$$
and that would be contradictory.
\end{enumerate}
In short we have $w(t)\leq c<0$ for all $t \in (0,t_{0}+\epsilon)$ and $(0, t_{0}+\epsilon) \subset (0, \delta)$, that is, $w$ is differentiable at $(0,t_{0}+\epsilon)$. It follows that
$$
\frac{dw}{dt} \geq \frac{w^{2}(t)}{3t} \geq \frac{c^{2}}{3t}.
$$
From the fundamental theorem of calculus and Theorem 3, for $0<s<t_{0}$
$$
w(t_{0})-w(s) \geq \frac{c^{2}}{3}[\ln(t_{0})- \ln(s)]\,\,.
$$
From this inequality we can conclude that
$$
\lim\limits_{s \rightarrow 0}w(s)= - \infty.
$$
\end{proof}
\begin{corollary}
	Let $w:(0, + \infty)\rightarrow \mathbb{R}$ be a differentiable function and with continuous derivative at $(0,\delta)$ satisfying:
	
	\begin{enumerate}
		\item[i)]   $\displaystyle{w-\frac{1}{3}\frac{d \ln|w|}{d\ln a} \geq 0};$
		\item[ii)] $-1\leq w(a)$ for all $a>0$\,\,.
	\end{enumerate}
Then, $w(t)\geq0$ for all $t \in (0, \delta).$
\end{corollary}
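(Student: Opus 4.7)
The plan is to deduce the Corollary as a straightforward consequence of the preceding Proposition by arguing through contraposition. The Proposition shows that any negative value of $w$ at a single point $t_0 \in (0,\delta)$ forces $\lim_{t\to 0^+} w(t) = -\infty$; condition ii) of the Corollary explicitly forbids $w$ from being unbounded below, since it requires $w(a) \geq -1$ for every $a>0$. So I would negate the conclusion and derive an immediate contradiction with the lower bound.

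Concretely, I would suppose for contradiction that there exists $t_0 \in (0,\delta)$ with $w(t_0) < 0$. Since $w$ is differentiable with continuous derivative on $(0,\delta)$ and satisfies hypothesis i), the hypotheses of the Proposition are met on $(0,\delta)$. Applying the Proposition yields $\lim_{t \to 0^+} w(t) = -\infty$. By the definition of an infinite limit, for the choice of bound $M = -1$ there exists $\eta \in (0,\delta)$ such that $w(t) < -1$ whenever $0 < t < \eta$. Picking any such $t$ produces a point $a > 0$ at which $w(a) < -1$, directly contradicting hypothesis ii). Therefore no such $t_0$ can exist, and $w(t) \geq 0$ for every $t \in (0,\delta)$, as claimed.

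I do not expect any serious obstacle here: all the analytic work (the differential inequality manipulation, the integration against $c^2/(3t)$, the use of the mean value theorem and the comparison theorems to force divergence at the origin) has already been done inside the Proposition. The only subtlety worth verifying is that the Proposition's conclusion $\lim_{t \to 0^+} w(t) = -\infty$ is incompatible with condition ii) not merely on $(0,\delta)$ but on all of $(0,+\infty)$; this is automatic because $(0,\eta) \subset (0,+\infty)$ for any $\eta > 0$, so the bound $w(a) \geq -1$ of hypothesis ii) applies on the very neighborhood of $0$ where the Proposition forces $w$ to fall below $-1$. With that observation, the contradiction is immediate and the Corollary follows.
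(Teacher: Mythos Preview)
Your proof is correct and follows essentially the same approach as the paper's own proof: assume some $t_0\in(0,\delta)$ with $w(t_0)<0$, invoke the Proposition to obtain $\lim_{t\to 0^+}w(t)=-\infty$, and note that this contradicts the lower bound $w(a)\ge -1$ of hypothesis~ii). The only difference is that you spell out the contradiction more explicitly (choosing $M=-1$ and a neighborhood $(0,\eta)$), whereas the paper simply asserts that the divergence ``could not occur'' under hypothesis~ii).
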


\begin{proof}
If there were $t_{0} \in (0, \delta)$ such that $w(t_{0})<0$, then by the previous proposition we wold have that $\lim\limits_{s \rightarrow 0}w(s)= - \infty$. However from hypothesis 2 above this could not occur. 
\end{proof}


\begin{corollary}
Let $w:(0, + \infty) \rightarrow \mathbb{R}$ be a differentiable function at $(0, \delta)$  and with continuous derivative in this interval for some $\delta<1$. Additionally, suppose that the set $\left\{t \in (0,1); w \mbox{ non differentiable at t}\right\}$ is finite and that
\begin{enumerate}
\item[i)]  $\displaystyle{w-\frac{1}{3}\frac{d \ln|w|}{d\ln a} \geq 0};$ where $|w|$ is differentiable.

\item[ii)] $-1\leq w(a)$ for all $a>0$.

\item[iii)] $ \lim\limits_{a \rightarrow 1}w(a)=w_0$ for some $-1\leq w_0<0$.
\end{enumerate}
Then $w$ is discontinuous at some point $t \in (0, 1)$.
\end{corollary}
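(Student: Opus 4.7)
The plan is a proof by contradiction: assume $w$ is continuous on the whole of $(0,1)$, and derive a contradiction from (i)--(iii). Three ingredients are used to set up the contradiction. First, Corollary 1 applied on $(0,\delta)$ (using conditions (i) and (ii)) forces $w(t) \geq 0$ for every $t \in (0,\delta)$. Second, by condition (iii) there exists $t_1 \in (\delta, 1)$ with $c := w(t_1) < 0$. Third, continuity lets us localize the sign change: the supremum
\[
\alpha \;=\; \sup\bigl\{t \in (0, t_1] : w(t) \geq 0\bigr\}
\]
lies in $[\delta, t_1)$; continuity of $w$ along a sequence realizing the sup gives $w(\alpha) \geq 0$, and by the definition of the sup one has $w(t) < 0$ for every $t \in (\alpha, t_1]$. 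The goal is then to show that the differential inequality in (i) drives $w$ at $\alpha$ to a strictly negative value, contradicting $w(\alpha) \geq 0$.

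For that, I would rewrite condition (i) at points where $w < 0$: using $d\ln|w|/d\ln a = (a/w)\,dw/da$, the constraint becomes $dw/dt \geq K\,w(t)^2/t$ for a positive constant $K$, holding off the finite exceptional set $F$. The right-hand side is strictly positive, so $w$ is strictly increasing on each smooth sub-interval of $(\alpha, t_1] \setminus F$; continuity of $w$ across $F$ then makes $w$ strictly increasing on the whole of $(\alpha, t_1]$. In particular $w(t) \leq c$ there, so $w(t)^2 \geq c^2$, and the inequality sharpens to $dw/dt \geq K c^2/t$. After the substitution $u = \ln t$ this becomes $dw/du \geq K c^2$, a positive \emph{constant}, on each smooth piece. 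Applying the mean value theorem (Theorem 2) on each of the finitely many smooth sub-intervals partitioning $(\ln\alpha, \ln t_1)$, and telescoping via continuity of $w$ at the partition points, yields
\[
w(t_1) - w(s) \;\geq\; K c^2 \bigl(\ln t_1 - \ln s\bigr) \qquad \text{for every } s \in (\alpha, t_1),
\]
i.e.\ $w(s) \leq c + K c^2 \ln(s/t_1)$. Letting $s \to \alpha^+$ and using continuity of $w$ at $\alpha$ gives $w(\alpha) \leq c + K c^2 \ln(\alpha/t_1)$. Since $0 < \alpha < t_1$ one has $\ln(\alpha/t_1) < 0$, so the right-hand side is strictly negative, contradicting $w(\alpha) \geq 0$.

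The main obstacle is integrating the differential inequality across the finitely many non-differentiability points in $F$ without losing the logarithmic form. The change of variables $u = \ln t$ is what makes this clean: on every smooth piece the inequality reduces to a lower bound on $dw/du$ by the positive constant $K c^2$, so the MVT supplies a sharp per-piece increment bound that telescopes across $F$ using only continuity of $w$. Note also that, in contrast to the proposition, the endpoint of integration is $\alpha > 0$ rather than $0$, so the argument does not produce a blow-up to $-\infty$; instead it produces a finite but strictly negative upper bound for $w(\alpha)$, which is precisely what contradicts the non-negativity of $w(\alpha)$ forced by continuity from the left.
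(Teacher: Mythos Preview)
Your argument is correct and structurally mirrors the paper's: both proceed by contradiction, invoke Corollary~1 to force $w\ge 0$ near $0$, take the supremum of a level set to localize the sign transition, use the differential inequality to get monotonicity of $w$ on the negative side, and glue across the finite exceptional set $F$ by continuity. The paper uses the threshold $d\in(w_0,0)$ and the interval $(\bar t,1)$, concluding $d\le w(\bar t)\le w_0<d$; you use the threshold $0$ and the bounded interval $(\alpha,t_1]$, concluding $0\le w(\alpha)<0$.

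One remark: your quantitative integration via the substitution $u=\ln t$ and the piecewise MVT is correct but unnecessary. You have already established (in your step~7) that $w$ is strictly increasing on $(\alpha,t_1]$, hence $w(s)\le w(t_1)=c$ for every $s\in(\alpha,t_1]$; letting $s\to\alpha^+$ and using continuity gives $w(\alpha)\le c<0$ immediately, which is the contradiction. The extra logarithmic term $Kc^2\ln(\alpha/t_1)$ only makes the bound more negative and plays no role. This is precisely how the paper closes the argument --- monotonicity alone suffices, no integration is needed.
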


\begin{proof}
Suppose that there is a $w$, differentiable at $(0,\delta)$ with continuous derivative in this interval, that it satisfies i), ii) and iii) and that $w$ is continuous at $(0,1)$. From Corollary $1$ there is $t_{0}<1$ such that $w(t_{0})\ge0$. Let us choose $d \in \mathbb{R}$ such that 
\begin{equation}\label{eq0}
0>d>w_0\,\,,
\end{equation}
 where $w_0$ is given by hypothesis iii). From Theorem 1, there is a $t_{1} \in (t_{0},1)$ such that $w(t_{1})=d$. Let us define
$$
A= \left\{t \in (0,1); w(t)\geq d\right\}\,\,.
$$
The set $A$ is not empty, given that $t_{1} \in A$. Let 
$$
\overline{t}= \sup A
$$ 
($\sup A$ is the supremum of the set $A$).
Using the continuity condition and iii) there is $\epsilon>0$ such that for all $t \in (1-\epsilon,1)$, $w(t)<d$. This imply that $\overline{t}\leq 1-\epsilon<1$. For all $t \in (\overline{t},1)$ we have $w(t)<d<0$, in particular, by i)
\begin{equation}\label{eq1}
\frac{dw}{dt}\geq \frac{w^{2}}{3a}>0
\end{equation}
Let $t_{1}, t_{2},..., t_{n}$ be the points of the interval $(\overline{t},1)$ such that $w$ is not differentiable. Let us suppose, without lost of generality, that $t_{0}=\overline{t}< t_{1}< t_{2}<,...,< t_{n}<1=t_{n+1}$. Note that, by using Eq. (\ref{eq1}), the function $w$ is an increasing one at $(t_{i},t_{i+1})$ for all $i=0,1,2,...,n+1$. Therefore, for all $s,t \in (t_{i},t_{i+1})$
 we have $w(s)<w(t)$ if $s<t$. From the continuity of $w$ at $t_{i}$ we have
 $$
 w(t_{i})= \lim\limits_{s \rightarrow t_{i}^{+}}w(s)\leq w(t)\,\,,
 $$

\noindent and from the continuity of $w$ at $t_{i+1}$ we have
\begin{equation}\label{eq2}
w(t_{i})\leq \lim\limits_{t \rightarrow t_{1}}w(t)=w(t_{i+1})\,\,,
\end{equation}
for all $i=0,1,2,...n$.
By the inequalities in Eq. (\ref{eq2}), the supremum definition and the continuity of $w$, we have that
$$
d \leq w(\overline{t}) \leq w(t_{1})\leq w(t_{2}) \leq... \leq w(t_{n-1})\leq w(t_{n})\leq w_0\,\,.
$$
This inequality is in contradiction with the inequality in Eq. (\ref{eq0}). This result proofs the corollary. 
\end{proof}

\begin{corollary}
	There is no differentiable function $w:(0, + \infty) \rightarrow \mathbb{R}$ with continuous derivative at  $(0, \delta)$ for some $\delta>0$ which satisfy:
	\begin{itemize}
		\item[i)]  $\displaystyle{w-\frac{1}{3}\frac{d \ln|w|}{d\ln a} \geq 0};$
		\item[ii)] $-1\leq w(a)$ for all $a>0$;
		\item[iii)]  $ \lim\limits_{a \rightarrow 1}w(a)=w_0$, for some $0>w_0 \geq -1$.
	\end{itemize}
\end{corollary}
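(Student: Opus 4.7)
My plan is to obtain Corollary 3 as an immediate consequence of Corollary 2, since Corollary 3 is simply the special case of Corollary 2 in which the exceptional set of non-differentiability points is empty. I will argue by contradiction: assume that there exists a function $w:(0,+\infty)\to\mathbb{R}$ that is differentiable on its entire domain, whose derivative is continuous on some interval $(0,\delta)$ with $\delta>0$, and that satisfies conditions i), ii), iii). The goal is to show that $w$ meets every hypothesis of Corollary 2, and then to exploit its conclusion to produce a contradiction.

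To check the hypotheses of Corollary 2 one by one: differentiability on $(0,\delta)$ with continuous derivative in that interval is part of the hypothesis; conditions i) and ii) are copied verbatim from the hypotheses of Corollary 3; condition iii) of Corollary 2, namely $\lim_{a\to1}w(a)=w_0$ with $-1\le w_0<0$, is exactly the same statement as $0>w_0\ge-1$ in Corollary 3. The only remaining hypothesis of Corollary 2 requires that the set $\{t\in(0,1): w \text{ is not differentiable at } t\}$ be finite. But since $w$ is assumed to be differentiable on all of $(0,+\infty)$, this set is empty, and the empty set is vacuously finite.

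Therefore Corollary 2 applies and forces the existence of some $t_\ast\in(0,1)$ at which $w$ is discontinuous. However, every differentiable function is continuous at every point of its domain, so no such $t_\ast$ can exist. This contradiction shows that no $w$ with the stated properties exists, which is the content of Corollary 3.

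There is essentially no technical obstacle here: all of the analytic work — the fundamental-theorem-of-calculus argument that blows $w$ down to $-\infty$ in the Proposition, the sign argument of Corollary 1, and the monotonicity-plus-supremum argument of Corollary 2 — has already been carried out. The only point to keep in mind is that Corollary 2 was stated with a \emph{finite} exceptional set rather than a \emph{nonempty} one, precisely so that the empty set is permitted and the present corollary follows at once.
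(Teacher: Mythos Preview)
Your proposal is correct and follows essentially the same route as the paper: the paper's own proof simply states that the corollary is a straightforward consequence of Corollary~2 together with the fact that every differentiable function is continuous. Your write-up merely unpacks this in slightly more detail (checking that the empty non-differentiability set is finite and that the hypotheses match), which is fine.
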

\begin{proof}
This corollary is  a straightforward consequence of the previous corollary and of the fact that every differentiable function is continuous.
\end{proof}

This result proofs our assertion that there is no function of $C_1$ class that explain the present time accelerated expansion of the Universe and at same time satisfies the thermodynamical bounds. It should be noted that if the continuity condition is relaxed, it is possible to build functions $w(a)$ that fulfill the conditions i), ii) and iii) above except on a finite set of points where $w(a)$ is not differentiable. However, 
physically speaking, we have no reason to think that $w$ is not differentiable.\footnote{Remember that the case $w=-1$, which mimics the cosmological constant, is in excellent agreement with the data.}. Also, in these discontinuity points the stability condition fails. Thermodynamics do not allows such exceptions.

Nevertheless, it is interesting for the sake of completeness obtain an example of a differentiable function except from a set of finite points that satisfies the conditions i), ii) and iii). As we have seen, in this case we should look for a positive function for $t$ sufficiently small and with at least one discontinuity point in the interval $(0,1)$. In the next proposition we will show a function with these characteristics. 



\begin{proposition}
Let us consider $0>w_0>-1$ and $d=(w_0+1)/(3w_0)$. Let $h:(0,e^{d}) \rightarrow \mathbb{R}$ be a positive, decreasing and differentiable function with $h(a)<1$ for all $a$. By defining the function $w:(0,+\infty) \rightarrow \mathbb{R}$ such as
$$
w(a)= \left\{
\begin{array}{lll}h(a)&if & a\in \left(\left.0,e^{d}\right]\right.\vspace{0.3cm}\\
-\frac{1}{3\ln (a)-\frac{1}{w_0}}&if& a \in (e^{d},+ \infty).
\end{array}
\right.
$$
Then
\begin{enumerate}
	\item $\displaystyle{w-\frac{1}{3}\frac{d \ln|w|}{d\ln a}\geq 0}$ for all $a \neq e^{d};$
	\item $ \lim\limits_{a \rightarrow 1}w(x)=w_0$
	\item $-1\leq w(a)\leq 1$ for all $a> 0.$
\end{enumerate}
\end{proposition}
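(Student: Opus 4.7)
The plan is to split along the two pieces of the piecewise definition and verify each claim on each piece. First observe that the hypothesis $w_0 \in (-1, 0)$ forces $d = (w_0+1)/(3w_0) < 0$, so $e^{d} \in (0, 1)$ and in particular $a=1$ lies in the second branch. Claim 2 is then an immediate direct substitution: plugging $a=1$ into $-1/(3\ln a - 1/w_0)$ yields $-1/(-1/w_0) = w_0$, and continuity of the second branch turns this value into the limit.

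For claim 1, I would treat the two pieces independently. On $(0, e^{d})$, $w = h$ is positive and decreasing, so $d\ln|h|/d\ln a = (a/h)\,h'(a) \le 0$ and therefore $w - (1/3)\,d\ln|w|/d\ln a \ge h > 0$. On $(e^{d}, +\infty)$, I would introduce the abbreviation $u(a) = 3\ln a - 1/w_0$ so that $w = -1/u$. A short computation gives $d\ln|w|/d\ln a = -d\ln|u|/d\ln a = -3/u = 3w$, so the expression in claim 1 equals $w - w = 0$, saturating the inequality. This identity is really the whole point of the piecewise formula: the second branch is exactly the solution of the ODE one obtains by imposing equality in condition i).

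For claim 3, the bound on the first piece is given by hypothesis ($0 < h(a) < 1$). On the second piece, the key is the value of $d$: one computes $u(e^{d}) = 3d - 1/w_0 = (w_0 + 1)/w_0 - 1/w_0 = 1$, and $u$ is strictly increasing from this value to $+\infty$, so $w = -1/u$ takes values in the open interval $(-1, 0) \subset [-1, 1]$.

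The only genuinely non-trivial step is the saturation identity on the second branch, and the whole delicacy of the construction is the choice of $d$, which is calibrated so that the denominator $u(a)$ equals exactly $1$ at $a = e^{d}$. No continuity at the junction $a = e^{d}$ is required or obtained; indeed $w$ necessarily jumps there from a positive value below $1$ down to $-1$, which is precisely the kind of discontinuity that Corollary 3 shows cannot be avoided.
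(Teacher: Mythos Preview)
Your proof is correct and follows essentially the same approach as the paper: both split along the two branches, use positivity plus monotonicity of $h$ to get strict inequality on $(0,e^{d})$, verify by direct computation that the second branch saturates condition~1, evaluate at $a=1$ for condition~2, and use the computation $u(e^{d})=1$ together with monotonicity of $u$ for condition~3. Your write-up is slightly more explicit in carrying out the saturation calculation on the second branch, but the structure and ideas are identical.
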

\begin{proof}
	The function $w$ is decreasing at $(0,e^{d})$. In this case, 
	$$
	\frac{dw}{dt}<0 \leq \frac{w^{2}}{3a}
	$$
	and given that $w>0$ in this interval
$$\displaystyle{w-\frac{1}{3}\frac{d \ln|w|}{d\ln a}> 0}.$$

\noindent for all $a \in (0, e^{d})$. If  $a>e^{d}$, then
$$\displaystyle{w-\frac{1}{3}\frac{d \ln|w|}{d\ln a}= 0}.$$ Therefore the function $w$ satisfies the condition 1. Note that $e^{d}<1$ and that $w$ is continuous at $1$. Thus
$$
\lim\limits_{a \rightarrow 1}w(a)=w(1)=w_0,
$$
that is, the function $w$ satisfy 2. By hypothesis, for $a \in (0, e^{d})$, $0<w(a)<1$. The function $v(a)=3\ln(a)-\frac{1}{w_0}$ is crescent, and $v(e^{d})=1$. Therefore, for all $t>e^{d}$
$$
0>w(t)=\frac{-1}{v(t)}>-1
$$
which proofs 3.
\end{proof}

Note that the discontinuity point of the functions shown in the previous proposition occurs at the point $e^{d}$ such that $e^{d} \in (0,1)$, since $d<0$. 

\begin{corollary}
Let us consider $0>w_0>-1$ and $d=(w_0+1)/(3w_0)$. Let us define the function $w:(0,+\infty) \rightarrow \mathbb{R}$ such as
$$
w(a)= \left\{
\begin{array}{lll}-\frac{a}{2e^{d}}+1&if& a\in \left(\left.0,e^{d}\right]\right.\vspace{0.3cm}\\
\frac{-1}{3\ln (a)-\frac{1}{w_0}}&if& a \in (e^{d},+ \infty).
\end{array}
\right.
$$
Then:
\begin{enumerate}
	\item $\displaystyle{w-\frac{1}{3}\frac{d \ln|w|}{d \ln a}\geq 0}$ for all $a \neq e^{d};$
	\item $ \lim\limits_{a \rightarrow 1}w(a)=w_0$;
	\item $-1\leq w(a)\leq 1$ for all $a> 0.$
\end{enumerate}	
\end{corollary}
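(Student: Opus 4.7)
The plan is to recognize that this corollary is a direct specialization of the preceding Proposition 2, obtained by taking the unspecified function $h$ on $(0, e^{d}]$ to be the explicit affine expression $h(a) = 1 - a/(2e^{d})$. The proof therefore reduces to verifying that this particular $h$ meets the three structural hypotheses demanded in Proposition 2 (positive, decreasing, differentiable, and strictly less than $1$ on its domain), after which the three conclusions transfer over automatically.

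First, I will observe that $h$ is affine in $a$, hence differentiable on $(0, e^{d})$ with constant derivative $h'(a) = -1/(2e^{d}) < 0$; this simultaneously establishes differentiability and the strictly decreasing property. Next, I will establish the bounds: since $a \in (0, e^{d}]$ the quantity $a/(2e^{d})$ lies in $(0, 1/2]$, so $h(a) = 1 - a/(2e^{d})$ lies in $[1/2, 1)$, yielding both positivity and the strict upper bound $h(a) < 1$. With the hypotheses of Proposition 2 verified, conclusions 1, 2, and 3 follow directly from that proposition.

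For a fully self-contained argument I would also check the three conditions directly. On $(0, e^{d})$, condition 1 is immediate: since $w = h > 0$ and $dw/da < 0$, the quantity $w - (a/(3w))\, dw/da$ is a sum of two positive terms. On $(e^{d}, +\infty)$, a short computation shows $dw/da = 3w^{2}/a$, so the bracket vanishes identically. Condition 2 follows from the fact that $e^{d} < 1$ together with continuity of the second branch at $a = 1$, where $w(1) = -1/(-1/w_{0}) = w_{0}$. Condition 3 on the first branch is the inclusion $h((0,e^{d}]) \subset [1/2, 1) \subset [-1,1]$; on the second branch it follows because $v(a) := 3\ln a - 1/w_{0}$ is increasing with $v(e^{d}) = 1$, so $-1/v(a) \in (-1, 0)$ for $a > e^{d}$. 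There is essentially no obstacle here; the only mild bookkeeping concern is the sign convention for $d\ln|w|/d\ln a$ across the two branches, but the point $a = e^{d}$ is explicitly excluded from condition 1, so this causes no difficulty.
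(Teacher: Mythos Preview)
Your proposal is correct and follows essentially the same approach as the paper: the paper's proof simply notes that $h(a) = -a/(2e^{d}) + 1$ is positive, decreasing, and satisfies $h(a) < 1$ on $(0, e^{d}]$, and then invokes Proposition 2. Your additional self-contained verification of conditions 1--3 is more than the paper provides, but the core argument is identical.
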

\begin{proof}
Note that the function $h(a)=-\frac{a}{2e^{d}}+1$, for all $a\in \left(\left.0,e^{d}\right]\right.$ is positive, decreasing and $h(a)<1$ for all $a$. The result follows from the previous proposition.
\end{proof}

\begin{figure}[t]
\includegraphics[width=3.0truein,height=3.0truein]{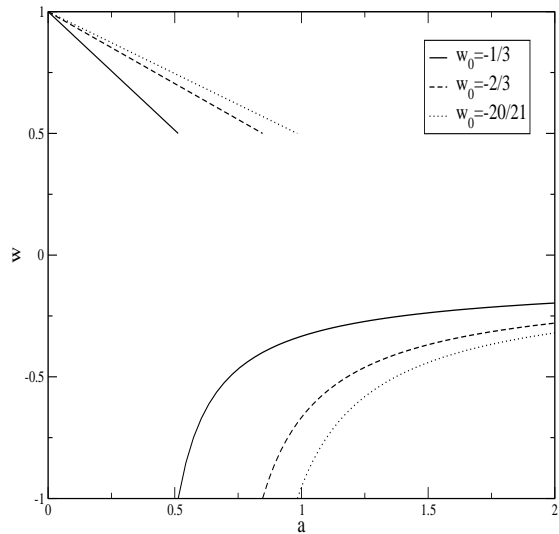}
\caption{Examples of quintessential EoS parameters satisfying the thermodynamical conditions except at a point where $w$ is not differentiable.\label{EoS_example}}
\end{figure}

\noindent This EoS parameter mimics a quintessence scalar field ($-1\leq w(a)\leq1$), but the stability condition fails at just an point: $e^d$. This is the closest definition we can obtain to a DE fluid that mimics a quintessence scalar field since the thermodynamical bounds implies that any DE EoS parameter will have at least one discontinuity point at $(0,1)$. Figure \ref{EoS_example} shows the curves from the above function for $w_0=-1/3,\,-2/3$ and $-20/21$. As we can see, the discontinuity approaches of $a=1$ as $w_0$ approaches of $-1$. In fact, as will be show below, even if we allow that the EoS parameter $w$ to be differentiable except on a finite number of points, it is impossible to build a quintessential like EoS which mimics a cosmological constant at present time ($w(a\to1)\to-1$) and satisfies the thermodynamical constraints.
%


\begin{proposition}
	There is no function $w:(0,+\infty) \rightarrow \mathbb{R}$, such that $1$ is an isolated point from the set of points in which $w$ is differentiable and that satisfies:
	\begin{enumerate}
		\item $\displaystyle{w-\frac{1}{3}\frac{d \ln|w|}{d\ln a} \geq 0};$ at the points where $w \neq 0$ and \\ \\ $w$ is differentiable\,\,;
		\item$ \lim\limits_{a \rightarrow 1}w(a)=-1$\,\,;
		\item $-1\leq w(a)\leq 1$ for all $a>0$.
	\end{enumerate}
\end{proposition}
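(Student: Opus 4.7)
The plan is to reproduce the ODE-comparison idea of Proposition~1, but driving toward $a=1$ instead of toward $a=0$. By the ``isolated point'' hypothesis I first pick $\delta>0$ so that $w$ is differentiable on the punctured neighbourhood $(1-\delta,1)\cup(1,1+\delta)$. Using condition~2 I then shrink $\delta$ if needed so that $w(a)<-1/2$ throughout this punctured set. In particular $w$ never vanishes there and $|w|=-w$ is differentiable, so condition~1 is in force pointwise on the whole interval $(1-\delta,1)$.

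Next I would unfold condition~1 on $(1-\delta,1)$. Since $w<0$, a direct computation gives $\frac{d\ln|w|}{d\ln a}=\frac{a\,w'(a)}{w(a)}$, so condition~1 reads $w(a)-\frac{a\,w'(a)}{3\,w(a)}\ge 0$. Multiplying through by the negative quantity $3w(a)$ flips the inequality and yields $w'(a)\ge \frac{3\,w(a)^2}{a}$. Because $w(a)^2\ge 1/4$ and $0<a<1$, this collapses to the clean pointwise bound $w'(a)\ge 3/4$ throughout $(1-\delta,1)$.

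The final step is to integrate this inequality on the left of $1$. For any $a_0<b$ in $(1-\delta,1)$ the fundamental theorem of calculus (valid since $w$ is $C^1$ there) gives $w(b)-w(a_0)\ge \tfrac{3}{4}(b-a_0)$. Letting $b\to 1^-$ and using condition~2 gives $-1-w(a_0)\ge \tfrac{3}{4}(1-a_0)>0$, whence $w(a_0)<-1$, which contradicts condition~3.

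The main obstacle is really only bookkeeping. One has to verify that condition~1 holds on a full left-interval of~$1$ rather than only at a sparse differentiable set; this is precisely what the ``$1$ is an isolated non-differentiable point'' hypothesis buys us, and one must also track the sign flip carefully when clearing the negative denominator~$3w$. Unlike Proposition~1, there is no need to invoke Theorem~4, because the linear comparison $w'\ge 3/4$ already forces $w$ below $-1$ in finite distance; and the same argument run on $(1,1+\delta)$ would in fact yield a contradiction from the right as well, but the left-side version alone closes the proof.
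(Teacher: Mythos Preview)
Your proof is correct and follows essentially the same route as the paper: isolate a punctured neighbourhood of $1$ on which $w$ is differentiable and negative, unfold condition~1 to obtain $w'(a)\ge 3w(a)^2/a>0$, and then contradict $\lim_{a\to1^-}w(a)=-1$ together with the lower bound $w\ge-1$. The paper finishes with bare strict monotonicity (an increasing function approaching $-1$ from the left must take values strictly below $-1$), while you pass through the sharper pointwise bound $w'\ge 3/4$ and integrate; one small caveat is that the hypotheses give differentiability but not $C^1$, so your inequality $w(b)-w(a_0)\ge\tfrac34(b-a_0)$ is properly justified by the Mean Value Theorem rather than the fundamental theorem of calculus.
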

	\begin{proof}
		Suppose by contradiction that there is a function that fulfill all conditions described in the proposition 3.
		Let $J$ be an interval centered at $1$ such that $w$ is differentiable at
		$J\setminus \left\{1\right\}$.  It is possible to choose this interval since $1$ is a isolated point from the set of points in which $w$ is differentiable. Given that $\lim\limits_{a \rightarrow 1}w(a)=-1$ then we can suppose, without lost of generality, that $w<0$ at $J \setminus \left\{1\right\}$. 
		Therefore, for all $a \in J \setminus \left\{1\right\}$
		$$
		\frac{d \ln|w|}{d\ln a}=a\frac{d \ln(-w)}{da} = \frac{a}{w}\frac{dw}{da}
		$$
		so that
		$$\displaystyle{w-\frac{1}{3}\frac{a}{w}\frac{dw}{da} \geq 0}.$$
		For all $a \in J\setminus\left\{1\right\}$
		$$\displaystyle{w^{2}-\frac{a}{3}\frac{dw}{da} \leq 0}.$$
		that is,
		$$0<\displaystyle{\frac{3w(a)^{2}}{a}\leq \frac{dw}{da}(a)}.$$
		Thus the function $w$ is increasing at $J\setminus \left\{1\right\}$. Let $t_{0}<t_{1}<1$, $t_{0},t_{1}\in J$. We have that
		$$
		-1 \leq w(t_{0})<w(t_{1})< w(a).
		$$
		for all $a \in J$, with $t_{1}<a<1$. Therefore
		$$
		-1=\lim\limits_{a \rightarrow 1^{-}}w(a)\geq w(t_{1})>w(t_{0})\geq -1
		$$
		which is  contradictory. This result proofs the proposition.
	\end{proof}

\begin{corollary}
There is no function $w:(0, + \infty) \rightarrow \mathbb{R}$ differentiable except on a finite set that satisfies:
\begin{enumerate}
	\item $\displaystyle{w-\frac{1}{3}\frac{d \ln|w|}{d\ln a} \geq 0};$
	\item$ \lim\limits_{a \rightarrow 1}w(a)=-1$
	\item $-1\leq w(a)\leq 1$ for all $a>0$.
\end{enumerate}
\end{corollary}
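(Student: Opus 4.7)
The plan is to reduce Corollary 4 directly to Proposition 3, which already does all the real work. Proposition 3 only assumes that $1$ is an isolated point of the complement of the differentiability set of $w$, i.e., that there exists an interval $J$ centered at $1$ such that $w$ is differentiable on $J \setminus \{1\}$. The corollary strengthens this by assuming differentiability off a full finite set, so my job is simply to deduce the local isolation hypothesis from global finiteness.

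Concretely, I would write $F = \{t \in (0,+\infty) : w \text{ is not differentiable at } t\}$ and note that by hypothesis $F$ is finite. Split into two cases. If $1 \notin F$, then since $F$ is finite the distance from $1$ to $F$ is strictly positive, so I can pick $\epsilon > 0$ with $F \cap (1-\epsilon,1+\epsilon) = \emptyset$; then $w$ is differentiable on all of $J = (1-\epsilon,1+\epsilon)$, which in particular contains the punctured neighborhood $J \setminus \{1\}$. If $1 \in F$, then again by finiteness I can shrink $\epsilon$ so that $F \cap (1-\epsilon,1+\epsilon) = \{1\}$, and $w$ is differentiable on $J \setminus \{1\}$ exactly as required. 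In both cases the isolation hypothesis of Proposition 3 is verified, and hypotheses 1, 2, 3 of Corollary 4 are literally the hypotheses 1, 2, 3 of Proposition 3, so Proposition 3 applies and yields a contradiction.

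There is no real obstacle here; the corollary is a packaging statement. The only point to be careful about is to make sure that Proposition 3 genuinely uses differentiability only on the punctured neighborhood $J \setminus \{1\}$ and not anywhere else: indeed, its argument only integrates the differential inequality $\tfrac{dw}{da} \ge 3w^2/a$ between two points $t_0 < t_1 < 1$ chosen inside $J$, so the finitely many other bad points lying outside $J$ are irrelevant. Once this is observed, the reduction is immediate and the proof is one line: apply Proposition 3 to the interval $J$ produced above.
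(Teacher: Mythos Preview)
Your reduction is correct and is essentially the paper's own proof: the paper simply observes that finiteness of the non-differentiability set forces $1$ to be isolated in that set, and then invokes Proposition~3. Your case split on whether $1\in F$ is a more explicit version of that one-line remark, and your check that Proposition~3 only uses differentiability on the punctured neighborhood $J\setminus\{1\}$ is accurate (though note that the paper's argument there derives a contradiction from monotonicity of $w$ rather than by integrating the inequality).
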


\begin{proof}
In fact, since the set of points where $w$ is not differentiable is finite, the number $1$ would be an isolated point of this set. The corollary follows as a consequence of the previous proposition.
\end{proof}

\section{Conclusions}

By considering the content of the Universe as a perfect fluid, it was shown in Ref. \cite{Barboza_thermo} that the cosmic fluids should necessarily have a thermodynamical stability. In the case of a perfect fluid with an EoS parameter $w=p/\rho$, this implies that the inequality $$w-\frac{1}{3}\frac{d\ln \vert w\vert}{d\ln a}\ge0$$ must be satisfied. In the case of a constant EoS parameter, this inequality implies that $w\ge0$ meaning that a DE fluid with a constant EoS parameter is ruled out by the laws of thermodynamics. Obviously this result does not mean immediately that a time-dependent DE EoS parameter is also ruled out by thermodynamics. Indications that this occurs is provided by Ref. \cite{Barboza_thermo} which shows that the observational constraints upon a DE fluid with a time-dependent EoS parameter are in fact in conflict with the stability conditions. 

In this paper we have continued the ideas presented in Ref. \cite{Barboza_thermo}. The stability condition along with the positiveness of the entropy, which implies that $w\ge-1$, means that an EoS parameter that is able to lead the Universe to the present day accelerated phase, i.e., $w_0=w(a=1)<-1/3$, should comply with the following requirements:
\begin{enumerate}
\item[i)] $\displaystyle{w-\frac{1}{3}\frac{d\ln \vert w\vert}{d\ln a}\ge0}$; 
\item[ii)] $w\ge-1$;
\item[iii)] $w_0<-1/3.$
\end{enumerate}

In this way, we have provided a rigorous demonstration, which showed that there is no differentiable function, $w(a)$, with a continuous derivative that satisfies all three conditions above (Corollary 3). We have shown that these three conditions are true only if $w$ is differentiable except on a finite set, where the  condition i) is not true. Although we have assumed that $w$ is differentiable except on a finite set,  we proved precisely that quintessence scalar fields $-1\leq w\leq1$, which mimics a cosmological constant at present time, i.e., $w(a\to1)\to-1$, were ruled out (Corollary 5). This means that DE fluids are unphysical and should be treated merely as mathematical artifacts to explain the data but, at the same time, carries no physical significance. Although a perfect fluid with $w=-1$ acts within the Einstein field equations in the same way that the sum of the zero point energy of all fields filling the Universe, this mathematical equivalence does not mean any physical equivalence  since the quantum vacuum is not a substance. Therefore, we believe that we have demonstrated precisely that the vacuum energy remains the strongest candidate to explain the current accelerated expansion of the Universe and the cosmological constant problem remains as one of the biggest problems of the theoretical cosmology.

\section*{Acknowledgments}

\noindent E.M.C.A. and J.A.N. thank CNPq (Conselho Nacional de Desenvolvimento Cient\'ifico e Tecnol\'ogico), Brazilian scientific support federal agency, for partial financial support, Grants numbers 302155/2015-5 (E.M.C.A.) and 303140/2017-8 (J.A.N.). 


\end{document}